\newtheorem{thm}{Theorem}[section]
\newtheorem{lem}{Lemma}[section]
\newtheorem{cor}{Corollary}[section]
\theoremstyle{definition}
\newtheorem*{Def*}{Definition}
\newtheorem*{rems*}{Remarks}
\numberwithin{equation}{section}
\newcommand{\D}{\displaystyle}
\newcommand{\be}{\begin{equation}}
\newcommand{\ee}{\end{equation}}
\newcommand{\ii}{\textup{i}}
\newcommand{\la}{\lambda}
\begin{document}
\title[Periodic problem for the nonlinear Schr\"odinger  equation]{Periodic problem for the nonlinear Schr\"odinger  equation}

\author{ V.P. Kotlyarov}
\address{Institute for Low Temperature Physics\\ 47,Lenin ave\\ 61103 Kharkiv\\ Ukraine}
\email{kotlyarov@ilt.kharkov.ua}
\bigskip

\centerline{\bf Foreword}
{\sl These are the English translation of the papers published only on Russian:\\
{\rm V.P. Kotlyarov, Periodic problem for the Schr\"odinger nonlinear equation,}\\
In: Voprosy matematicheskoi fiziki i funkcionalnogo analiza, {\bf 1}, Naukova Dumka, Kiev, 1976, pp.121-131 and \\
{\rm A.R.Its, V.P.Kotlyarov, Explicit formulas for solutions of the Schr\"odinger nonlinear equation,}\\  "Doklady Akad.Nauk Ukrainian SSR", ser.A, №10, 1976, pp.965-968.\\\\
During the last time I have received some requests on these papers.
I hope this reprint will be useful for interested readers (V.Kotlyarov).\\\\
}

\begin{abstract}
The paper offers the method of discovering of some class of solutions for  the nonlinear Schr\"odinger  equation $$\ii u_t+u_{xx}+2|u|^2u=0.$$
An algorithm of constructive solving of the Cauchy periodic problem with a finite-gap initial condition was also  obtained.
\end{abstract}
\maketitle

\section{Introduction}

In \cite{GGKM} was discovered  a method of the construction of decreasing solutions of the Korteweg de Vries (KdV) equation. The method bases on the inverse scattering theory for the Sturm-Lioville operator. Recently this method was developed \cite{M}-\cite{D} for obtaining periodic solutions to the KdV equation.

In the present paper we investigate the periodic finite-gap problems   for  the nonlinear  Schr\"odinger  equation
\be\label{NLS}
\ii v_t+v_{xx}+2|v|^2v=0,
\ee
by using  a method from \cite{M}.

The periodic finite-gap problem for another type of nonlinear equation
\be\label{NLS1}
\ii v_t+v_{xx}-2|v|^2v=0,
\ee
is studied in \cite{I0}, \cite{I}. In this case (equation (\ref{NLS1})), the spectral problem is connected with the Dirac self-adjoint operator while for equation (\ref{NLS}) the spectral problem  is not self-adjoint.

For the first time these nonlinear equations were studied in \cite{ZSh}, where the Cauchy problem was solved for the decreasing  (at infinity) initial functions.  The authors have developed the inverse scattering method for the both nonlinear equations. For this purpose they gave the formulation and solution of the scattering problems for the self-adjoint and not self-adjoint Dirac operators.

\section{Periodic finite-gap potentials}

Taking into account  ideas of \cite{M}, let us introduce a set of complex-valued $l$-periodic  potentials
\be\label{v0}
v(x,x_0,t):=v(x+x_0,t), \qquad v(x+l,x_0,t)=v(x,x_0,t)
\ee
and related with this set,  the Dirac operators
\be\label{L}
L:= J\frac{d}{dx}+Q, \qquad J=\begin{pmatrix}
                               \ii & 0\\
                               0 & -\ii \\
                             \end{pmatrix}
\ee
with  the not self-adjoint  potential matrix
$$
Q=\begin{pmatrix}
    0 & \ii v(x,x_0,t) \\
    \ii \bar v(x,x_0,t) & 0 \\
  \end{pmatrix}.
$$
The function $v(x,t)$ is supposed to be twice continuously differentiable in $x$ and has the first continuous derivative in $t$.

Let vector $y(x,z)$ is a solution of the equation
\be\label{Lyzy}
Ly=zy,
\ee
where $z$ is an arbitrary complex number.
We introduce additionally two sets of operators:
$$
M_k:=D_k+A_k(x,z,x_0,t), \quad D_1:=\frac{\partial}{\partial t},  \quad D_2:=\frac{\partial}{\partial x_0},
$$
where matrices $A_k$ ($k=1,2$) are choosing in such a way that
$$
(L-zI)M_ky=\{JA_k^\prime+JA_kJ(Q-zI)+(Q-zI)A_k-D_kQ\}y\equiv B_k(x,x_0,t)y.
$$
It is very important that $B_k$ have to be independent on $z$. Such matrices $A_k$ and $B_k$ do exist and have the form:
$$
A_1=\ii\begin{pmatrix}
           2z^2-|v|^2 & v^\prime_x-2\ii zv \\
            \bar v^\prime_x+2\ii z\bar v & -2z^2+|v|^2 \\
         \end{pmatrix},\qquad B_1=-\begin{pmatrix}
                               0 & v^{\prime\prime}_{xx}+2|v|^2v+\ii\dot v_t \\
          -\bar v^{\prime\prime}_{xx}-2|v|^2\bar v+\ii\dot{\bar v}_t  & 0 \\
                             \end{pmatrix}
$$
$$
A_2=\begin{pmatrix}
      \ii z & v \\
      -\bar v & -\ii z \\
    \end{pmatrix}, \qquad B_2=\begin{pmatrix}
                                0 & 0 \\
                                0 & 0 \\
                              \end{pmatrix}.
$$

Now let  $v=v(x,t)$ satisfies equation (\ref{NLS}). Then $B_1\equiv 0$ and, hence, operator $M_k$  transforms a solution of equation (\ref{Lyzy}) into another solution of the same equation (\ref{Lyzy}). Let $\Phi(x,z)\equiv\Phi(x,z,x_0,t)$ be the fundamental matrix of equation (\ref{Lyzy}) and let $\Phi(z,x_0,t)\equiv\Phi(l,z,x_0,t)$ be the monodromy matrix on interval $(0,l)$. Then, if $y(x,z)$ is a solution of equation (\ref{Lyzy}), then $y(x,z)=\Phi(x,z)y(0,z)$. Since operator $M_k$ transforms a solution of equation (\ref{Lyzy}) into another solution of the same equation then
$$
\tilde y_k(x,z):=M_k(x)y(x,z)=M_k(x)\Phi(x,z)y(0,z)
$$
and
$$
\tilde y_k(x,z):=\Phi(x,z)\tilde y_k(0,z)=\Phi(x,z)M_k(0) y(x,z).
$$
Hence
$$
\{M_k(x)\Phi(x,z)-\Phi(x,z)M_k(0) \} y(0,z)=0.
$$
Let us put $x=l$ in the last equation. In view of periodicity of the function $v$ we have $A_k(0,z,x_0,t)\equiv A_k(l,z,x_0,t)$ and therefore the last equation gives the following differential equation for the monodromy matrix:
\be\label{ME}
D_k\Phi(z,x_0,t)=\Phi(z,x_0,t)A_k(l,z,x_0,t)-A_k(l,z,x_0,t)\Phi(z,x_0,t), \qquad k=1,2
\ee
with initial conditions:
$$
\Phi(z,x_0,t)\vert_{t=0}=\Phi_1(z,x_0), \qquad \Phi(z,x_0,t)\vert_{x_0=0}=\Phi_2(z,t),
$$
where $\Phi_1(z,x_0)$ is the monodromy matrix of operators (\ref{L}) with $v=v(x,x_0,0)$, and $\Phi_2(z,t)$ is the monodromy matrix of operators (\ref{L}) with $v=v(x,0,t)$.

Let us put
\begin{align*}
2g=&\Phi_{11}+\Phi_{22},\\
2\ii f=&\Phi_{11}-\Phi_{22},\\
\psi=&\Phi_{12},\\
\varphi=&\Phi_{21},
\end{align*}
where $\Phi_{ik}, i,k=1,2$ are entries of the monodromy matrix $\Phi(z,x_0,t)$.
Then matrix equation (\ref{ME}) gives  $D_k g(z,x_0,t)=0$, i.e. $ g(z,x_0,t)$ is independent on $x_0$ and $t$. Other scalar equations  for functions $f=f(z,x_0,t)$, $\psi=\psi(z,x_0,t)$, $\varphi=\varphi(z,x_0,t)$ are as follows:
\begin{align}\label{tfpsiphi}
\dot f=&(\bar v^\prime+2\ii z\bar v)\psi-( v^\prime-2\ii z v)\varphi,\nonumber\\
\dot\psi=&-2(v^\prime-2\ii z v)f-2\ii(2z^2-|v|^2)\psi,\\
\dot\varphi=&2(\bar v^\prime+2\ii z\bar v)f+2\ii(2z^2-|v|^2)\varphi,\nonumber
\end{align}
and
\begin{align}\label{xfpsiphi}
 f^\prime=& \ii\bar v\psi+ \ii v\varphi ,\nonumber\\
 \psi^\prime=&2\ii  vf-2\ii z\psi,\\
 \varphi^\prime=&2\ii\bar v f+2\ii z \varphi,\nonumber
\end{align}
where the dot and prime mean differentiation in $t$ and $x_0$ respectively, and $v=v(x_0,t)$, $\bar v=\bar v(x_0,t)$.

The following symmetry property takes place: if vector $y(x,z)=(y_1(x,z), y_2(x,z))$ is a solution of (\ref{Lyzy}) then vector $\tilde y(x,z)=(-\overline{y_2(x,\bar z)}, \overline{y_1(x,\bar z)})$ is also a solution to (\ref{Lyzy}). Therefore entries of the monodromy matrix, being entire in $z$ analytic functions of the exponential type $l$, satisfy the following conditions:

$$
\Phi_{22}(z,x_0,t)=\overline{\Phi_{11}(\bar z,x_0,t)}, \qquad
\Phi_{21}(z,x_0,t)=-\overline{\Phi_{12}(\bar z,x_0,t)}.
$$

Hence, entire analytic functions $f$, $\psi$, $\varphi$ possess properties:
\be\label{sym}
f(z,x_0,t)=\overline{f(\bar z,x_0,t)}, \qquad \varphi(z,x_0,t)=-\overline{\psi(\bar z,x_0,t)}.
\ee

Let $\Phi_0(z)$ be the monodromy matrix of equation (\ref{Lyzy}) with independent on $x_0$ and $t$ potential $v=v(x)$ and let corresponding functions $f_0(z)$, $\psi_0(z)$, $\varphi_0(z)$ have the following factorization:
\be\label{h}
f_0(z)=\tilde f_0(z)h(z),\quad \psi_0(z)=\tilde \psi_0(z)h(z),\quad \varphi_0(z)=\tilde \varphi_0(z)h(z),
\ee
where $h(z)$ is an entire analytic function, and $\tilde f_0(z)$, $\tilde \psi_0(z)$, $\tilde \varphi_0(z)$ are some polynomials. One can put the leading coefficient of the polynomial $f_0(z)$ to be equal to one. Such a requirement defines function $h(z)$ uniquely.
\begin{Def*}
The periodic potential is called finite-gap, if the corresponding monodromy matrix $\Phi_0(z)$ possesses properties (\ref{h}).
\end{Def*}
Consider a translation $v(x,x_0)=v(x+x_0)$ of the finite-gap potential $v(x)$.
Then functions $f(z,x_0)$, $\psi(z,x_0)$ and $\varphi(z,x_0)$  defined by the monodromy matrix $\Phi(z,x_0)$ (such that $\Phi(z,0)=\Phi_0(z)$) are the solution of the system (\ref{xfpsiphi}) obeying to initial conditions:
\be\label{fgh0}
f(z,0)=\tilde f_0(z)h(z), \quad \psi(z,0)=\tilde\psi_0(z)h(z),\quad \varphi(z,0)=\tilde\varphi_0(z)h(z)
\ee
\begin{lem}
Under initial conditions (\ref{fgh0}) for equations (\ref{xfpsiphi}) their solution \{$f(z,x_0)$, $\psi(z,x_0)$, $\varphi(z,x_0)$ \}  is represented in the form:
\be\label{fghx0}
f(z,x_0)=\tilde f(z,x_0)h(z), \quad \psi(z,x_0)=\tilde\psi(z,x_0)h(z),\quad \varphi(z,x_0)=\tilde\varphi_0(z,x_0)h(z)
\ee
with the same function $h(z)$, and $\tilde f$, $\tilde\psi$, $\tilde\varphi$ are polynomials precisely the same degrees that the polynomials $\tilde f_0(z)$, $\tilde \psi_0(z)$, $\tilde \varphi_0(z)$. In addition, periodic $n$- gap potential $v(x_0)$ is expressed by the leading coefficient of polynomial $\tilde\psi(z,x_0)=\tilde\psi_n(x_0)z^n+\tilde\psi_{n-1}(x_0)z^{n-1}+\ldots+\tilde\psi_0(x_0)$ by the formula:
\be\label{v}
v(x_0)=\tilde\psi_n(x_0)
\ee
\end{lem}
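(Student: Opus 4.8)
The plan is to exploit the fact that the system (\ref{xfpsiphi}) is \emph{linear and homogeneous} in the triple $(f,\psi,\varphi)$ and that $h$ depends on $z$ alone. First I would factor out $h(z)$ by a uniqueness argument. Since the coefficients $\ii v,\ii\bar v,\pm2\ii z$ of (\ref{xfpsiphi}) involve $x_0$ and $z$ but in no way $h$, and $h=h(z)$ is independent of $x_0$, whenever a triple $(\tilde f,\tilde\psi,\tilde\varphi)$ solves the same system with the polynomial data $(\tilde f_0,\tilde\psi_0,\tilde\varphi_0)$, the triple $(h\tilde f,h\tilde\psi,h\tilde\varphi)$ solves (\ref{xfpsiphi}) with the data (\ref{fgh0}). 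By uniqueness for the Cauchy problem of a linear system in $x_0$ (with $z$ a parameter) this is \emph{the} solution, which gives the factorization (\ref{fghx0}) with the same $h$. Moreover the reduced system has coefficients entire (indeed polynomial of degree $1$) in $z$ and polynomial initial data, so by analytic dependence of the solution on the parameter $z$ the cofactors $\tilde f,\tilde\psi,\tilde\varphi$ are entire in $z$ for each $x_0$; it then remains only to see that they are polynomials of the asserted degrees.

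Next I would substitute $\tilde f=\sum_j\tilde f_j(x_0)z^j$, $\tilde\psi=\sum_j\tilde\psi_j(x_0)z^j$, $\tilde\varphi=\sum_j\tilde\varphi_j(x_0)z^j$ into (\ref{xfpsiphi}) and equate powers of $z$, obtaining the triangular recursion
\begin{align*}
\tilde f_j'&=\ii\bar v\,\tilde\psi_j+\ii v\,\tilde\varphi_j,\\
\tilde\psi_j'&=2\ii v\,\tilde f_j-2\ii\,\tilde\psi_{j-1},\\
\tilde\varphi_j'&=2\ii\bar v\,\tilde f_j+2\ii\,\tilde\varphi_{j-1},
\end{align*}
in which the $j$-th level is closed among $(\tilde f_j,\tilde\psi_j,\tilde\varphi_j)$ and forced only through $\tilde\psi_{j-1},\tilde\varphi_{j-1}$ of the level below. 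The first equation bounds $\deg\tilde f\le\max(\deg\tilde\psi,\deg\tilde\varphi)+1$, while the $z$-shift $-2\ii z\tilde\psi$ in the second (and the analogue in the third) forces the top term to be cancelled by $\tilde f$, whence $\deg\tilde f=\deg\tilde\psi+1=\deg\tilde\varphi+1$; with the normalization fixing the leading coefficient of $\tilde f_0$ to be $1$ this pins the degrees at $(n+1,n,n)$ and makes $\tilde f_{n+1}\equiv1$. For $j\ge n+2$ all initial coefficients vanish, so an upward induction applies: once level $n+1$ is known to be $(\,\tilde f_{n+1},\tilde\psi_{n+1},\tilde\varphi_{n+1})\equiv(1,0,0)$, each higher level satisfies a homogeneous linear system with zero Cauchy data and hence vanishes by uniqueness. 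The induction thus starts precisely at the vanishing of $\tilde\psi_{n+1},\tilde\varphi_{n+1}$, which begin at $0$ and stay $0$ iff their derivatives vanish, i.e. iff
\begin{align*}
v\,\tilde f_{n+1}=\tilde\psi_n,\qquad \bar v\,\tilde f_{n+1}=-\tilde\varphi_n.
\end{align*}
Using $\tilde f_{n+1}\equiv1$ these are exactly formula (\ref{v}) and its conjugate, the symmetry (\ref{sym}) ensuring $\tilde\varphi_n=-\overline{\tilde\psi_n}$ so that the two relations reduce to one.

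Thus the entire statement collapses to a single point, which I expect to be the main obstacle: proving that the leading relation $v(x_0)=\tilde\psi_n(x_0)$ (equivalently, that $\tilde\psi_{n+1},\tilde\varphi_{n+1}$ never leave $0$) holds for \emph{all} $x_0$ and not merely at $x_0=0$, where it is built into (\ref{h}). I would establish it in one of two ways. The conceptual route uses that $f,\psi,\varphi$ are the genuine monodromy functions of the translate $v(\cdot+x_0)$, whose spectral data — the discriminant $2g$, already shown to be $x_0$-independent, together with the entire factor $h(z)$ and the gap configuration — are translation invariant; consequently the finite-gap factorization persists for every $x_0$ with the same $h$ and the same cofactor degrees, and comparison of the leading coefficients yields (\ref{v}) by the standard large-$z$ asymptotics of the Dirac monodromy matrix. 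The self-contained route instead differentiates $v\tilde f_{n+1}-\tilde\psi_n$ along the flow, using the first two lines of the recursion and the symmetry (\ref{sym}) to show it is conserved and hence remains $0$. This last computation is the delicate step, since it is exactly where the definition of the potential through a leading coefficient is tied to the preservation of the polynomial degree, and where the special algebraic form of the matrices $A_k$ has to be used.
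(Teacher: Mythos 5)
Your opening reduction is sound and in places cleaner than the paper's: since (\ref{xfpsiphi}) is linear and homogeneous in $(f,\psi,\varphi)$ with coefficients not involving $h$, multiplying the solution with data $(\tilde f_0,\tilde\psi_0,\tilde\varphi_0)$ by $h(z)$ and invoking uniqueness of the Cauchy problem does give the factorization (\ref{fghx0}) with cofactors entire in $z$; and once polynomiality with degrees $(n{+}1,n,n)$ and $\tilde f_{n+1}\equiv 1$ is known, reading off the coefficient of $z^{n+1}$ in $\tilde\psi^\prime=2\ii v\tilde f-2\ii z\tilde\psi$ yields (\ref{v}) exactly as the paper does. The gap is at the point you yourself flag: nothing in your argument establishes that the entire cofactors are in fact polynomials of the unchanged degrees, i.e.\ that $\tilde\psi_{n+1},\tilde\varphi_{n+1},\tilde f_{n+2},\dots$ vanish identically. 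Your coefficient recursion is lower triangular (level $j$ is forced by level $j-1$), so the upward induction can only start once you already know $v\tilde f_{n+1}\equiv\tilde\psi_n$ for all $x_0$ --- which is (\ref{v}) itself. The ``self-contained route'' of showing $v\tilde f_{n+1}-\tilde\psi_n$ is conserved cannot close within (\ref{xfpsiphi}) alone: differentiating produces $v^\prime\tilde f_{n+1}$, and $v^\prime$ is not determined by that system, which treats $v$ as an external coefficient. Indeed, for a generic periodic $v$ paired with the same polynomial initial data the cofactors are genuinely transcendental entire functions, so any correct proof must use that $v$ is the finite-gap potential whose monodromy matrix generated the data.

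That missing input is exactly what the paper supplies with the conjugation identity. Letting $C(z,x_0)$ be the transfer matrix on $(0,x_0)$, periodicity gives $\Phi(z,x_0)=C(z,x_0)\Phi_0(z)C^{-1}(z,x_0)$; since $\det C\equiv 1$, the entries of $\Phi(z,x_0)$ are explicit bilinear combinations of the entries of $C$ in which the polynomials $\tilde f_0,\tilde\psi_0,\tilde\varphi_0$ appear as coefficients of $h$, which exhibits the factorization directly, and a comparison of the rough large-$z$ asymptotics of $\Phi(z,x_0)$ with those of $\Phi_0(z)$ bounds the growth of the entire cofactors and forces them to be polynomials of the original degrees. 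Your ``conceptual route'' gestures at precisely this (translation invariance of the finite-gap factorization with the same $h$ and the same degrees) but asserts it rather than proving it; the conjugation formula is the step you are missing. With it in hand, your recursion becomes a correct but redundant consistency check, and (\ref{v}) follows from the $z^{n+1}$ coefficient as above.
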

\begin{proof}
Let $C(z,x_0)$ be the fundamental solution of the matrix equation (\ref{Lyzy})
on the interval $(0,x_0)$. Due to the periodicity of $v(x)$, $C(z,x_0)$ is the fundamental solution of the matrix equation (\ref{Lyzy}) on the interval $(l,l+x_0)$. Then the monodromy matrix $\Phi(z,x_0)$, defined on the interval $(l,l+x_0)$, and the matrix $\Phi_0(z)$, defined on the interval $(0,l)$, are related by equality:
\be\label{C}
\Phi(z,x_0)=C(z,x_0)\Phi_0(z)C^{-1}(z,x_0)
\ee
In view of (\ref{fgh0}) entries of the matrix $\Phi_0(z)$ have the form:
$$
\Phi_0[11]=g+\ii\tilde f_0h,\quad \Phi_0[12]=\tilde\psi_0h,\quad
\Phi_0[21]=\tilde\varphi_0h,\quad \Phi_0[22]=g-\ii\tilde f_0h.
$$
Take into account that $\det C(z,x_0)\equiv 1$ we find from (\ref{C})
\begin{align*}
\Phi_0[11]=&g+\ii\left(\tilde f_0(C[11]C[22]+C[12]C[21])+\ii\tilde \psi_0C[11]C[21]-\ii\tilde\varphi_0C[12]C[22]\right)h,\\
\Phi_0[12]=&\left(\tilde\psi_0C[11]C[11]-\tilde \varphi_0C[12]C[12]-2\ii \tilde f_0C[11]C[12]\right)h,\\
\Phi_0[21]=&\left(\tilde\varphi_0C[22]C[22]-\tilde\psi_0C[21]C[21]+2\ii \tilde f_0C[21]C[22]\right)h,\\
\Phi_0[22]=&g-\ii\left(\tilde f_0)(C[11]C[22]+C[12]C[21])+\ii\tilde \psi_0C[11]C[21]-\ii\tilde\varphi_0C[12]C[22]\right)h
\end{align*}
These equalities give the representations (\ref{fghx0}) with some functions $\tilde f(z,x_0)$, $\tilde\psi(z,x_0)$, $\tilde\varphi(z,x_0)$. They are, in general, are entire analytic. To prove that these functions are polynomials we use rough asymptotic formulas for the entries of the monodromy matrices $\Phi(z,x_0)$ and $\Phi_0(z)$. Thus we find that $\tilde f(z,x_0)$, $\tilde\psi(z,x_0)$, $\tilde\varphi(z,x_0)$ are polynomials of the same degrees that initial ones.
So, if $v(x)$ is a periodic $n$-gap potential then the system of equations (\ref{xfpsiphi}) has the polynomial solution. For  $n$-gap potential the function $\tilde f(z,x_0)$ is a polynomial  of $n+1$-th degree. If so, then equations (\ref{xfpsiphi}) give that $\tilde\psi(z,x_0)$ and $\tilde\varphi(z,x_0)$ are polynomials of $n$-th degrees.
Further, since $\tilde f_{n+1}(x_0)=\tilde f_{n+1}(0)=1$, then the second equation gives equality (\ref{v})
\end{proof}

\section{Autonomous system of equations and solutions of the nonlinear Schr\"odinger  equation}

In this section we show that the  systems  of equations (\ref{tfpsiphi})  and (\ref{xfpsiphi}) generate some nonlinear autonomous equations, which lead to a solution of the nonlinear Schr\"odinger  equation (\ref{NLS}). Let coefficients of equations (\ref{tfpsiphi})  and (\ref{xfpsiphi}) are completely arbitrary, i.e. we will consider the next system of equations:
\begin{align}\label{tfpsiphi1}
\dot f=&(d+2\ii z b)\psi-(c-2\ii z a)\varphi,\nonumber\\
\dot\psi=&-2(c-2\ii z a)f-2\ii(2z^2-ab)\psi,\\
\dot\varphi=&2(d+2\ii zb)f+2\ii(2z^2-ab)\varphi,\nonumber
\end{align}
and
\begin{align}\label{xfpsiphi1}
 f^\prime=& \ii b\psi+ \ii a\varphi ,\nonumber\\
 \psi^\prime=&2\ii  af-2\ii z\psi,\\
 \varphi^\prime=&2\ii b f+2\ii z \varphi,\nonumber
\end{align}
where $a=a(x,t), b=b(x,t), c=c(x,t), d=d(x,t)$ are some arbitrary functions, and the dot and prime mean differentiation in $t$ and $x$ respectively (from here and below we use $x$ instead of $x_0$).

First of all we emphasize that each of the system has the conservation law:
\be\label{P0}
f^2(x,t,z)-\psi(x,t,z)\varphi(x,t,z)\equiv P(z),
\ee
where $P(z)$ is defined by initial conditions. If $f, \psi, \varphi$ satisfy
(\ref{tfpsiphi1})  and (\ref{xfpsiphi1}) simultaneously, then $P(z)$ is independent on $t$ and $x$. Below we consider the case when  $a, b,  c, d$ are unknown functions. In this case the systems of equations are underdetermined (not closed). However, we can make them to be closed. To this end, let us seek polynomial in $z$ solutions of these systems. For arbitrary $n\in\mathbb{N}$, choosing $f$ as a polynomial of $n+1$-th degree
$$
f(x,t,z)=\sum\limits_{j=0}^{n+1}f_j(x,t)z^j,
$$
it is easy to see that   $\psi$ and $\varphi $ have to be polynomials of $n$-th degree:
$$
\psi(x,t,z)=\sum\limits_{j=0}^{n}\psi_j(x,t)z^j, \qquad
\varphi(x,t,z)=\sum\limits_{j=0}^{n}\varphi_j(x,t)z^j
$$
System (\ref{tfpsiphi1}) transforms into the system of equations on polynomial coefficients:
\begin{align}\label{tfpsiphi2}
\dot f_j=&d\psi_j-c\varphi_j+2\ii b \psi_{j-1}+2\ii a\varphi_{j-1},&0\le j\le n+1, \nonumber\\
\dot\psi_l=&-2cf_l +4\ii af_{l-1}+2\ii ab\psi_l - 4\ii\psi_{l-2},&0\le l\le n,\\
\dot\varphi_m=&2df_m+4\ii bf_{m-1}-2\ii ab\varphi_m+4\ii\varphi_{m-2},&0\le m\le n.\nonumber
\end{align}
Here we put $f_j, \psi_l, \varphi_m$ are equal zero when $j, l, m <0$ and $j>n+1$, $l, m >n$. Besides these differential equations we have algebraic relations:
\begin{align*}
4\ii a f_{n+1}-4\ii\psi_{n}=&0,\\
2cf_{n+1}-4\ii a f_n+4\ii\psi_{n-1}=&0,\\
4\ii b f_{n+1}+4\ii\varphi_{n}=&0,\\
2df_{n+1}+4\ii b f_n+4\ii\psi_{n-1}=&0.
\end{align*}
The last relations give unknown functions:
\be\label{ab}
a=\frac{\psi_n}{f_{n+1}},\qquad b=-\frac{\varphi_n}{f_{n+1}},
\ee
\be\label{cd}
c=2\ii\left(\frac{\psi_nf_n}{f^2_{n+1}}-\frac{\psi_{n-1}}{f_{n+1}}\right),\qquad
d=2\ii\left(\frac{\varphi_nf_n}{f^2_{n+1}}-\frac{\varphi_{n-1}}{f_{n+1}}\right).
\ee
Moreover, the first equation of system (\ref{tfpsiphi2}) gives that $f_{n}$ and $f_{n-1}$ is independent on $t$. Substituting $a, b,  c, d$ into (\ref{tfpsiphi2}) we obtain the autonomous system of ODEs:
\be\label{F}
\dot y_j=F_j(y_0, y_1, \ldots, y_N), \qquad j=0,1,\ldots,N, \quad N=3n+1,
\ee
where $y_j=f_j\, (0\le j\le n+1)$, $y_{n+1+j}=\psi_j\, (0\le j\le n)$, $y_{2n+2+j}=\varphi_j\, (0\le j\le n)$, and the right hand sides $F_j$ are at most the third degree polynomials on $y_0, y_1, \ldots, y_N$.

In polynomial coefficients the system (\ref{xfpsiphi1}) is written in the form:
\begin{align}\label{xfpsiphi2}
 f_j^\prime=& \ii b\psi_j+ \ii a\varphi_j, &0\le j\le n+1,\nonumber\\
 \psi_l^\prime=&2\ii  af_l-2\ii\psi_{l-1}, &0\le l\le n,\\
 \varphi_m^\prime=&2\ii b f_m+2\ii\varphi_{m-1}, &0\le m\le n.\nonumber
\end{align}
For the functions $a$ and $b$ formulas (\ref{ab}) are valid  as before. The first equation of the system gives  that $f_{n+1}$ and $f_{n}$ are independent on $x$. After the elimination of $a$ and $b$ these equations can be also written in the form of autonomous system of ODEs:
\be\label{Phi}
y_j^\prime=\Phi_j(y_0,y_1, \ldots, y_N), \qquad j=0,1,\ldots,N, \quad N=3n+1,
\ee
where $\Phi_j$ are at most the second degree polynomials on $y_0, y_1, \ldots, y_N$.

Thus, the requirement of the existence of the polynomial in $z$ solution to systems (\ref{tfpsiphi1}) and (\ref{xfpsiphi1}) defines uniquely the functions $a, b, c, d$ through solutions of autonomous systems (\ref{F}) and (\ref{Phi}). On the other hand, if $f_j$, $\psi_l$ and $\varphi_m$ are a solution of autonomous systems (\ref{F}), (\ref{Phi}), and $a, b, c, d$ are the function constructed by (\ref{ab}), (\ref{cd}) then functions
\be\label{polyn}
f(x,t,z)=\sum\limits_{j=0}^{n+1}f_j(x,t)z^j,\qquad
\psi(x,t,z)=\sum\limits_{j=0}^{n}\psi_j(x,t)z^j, \qquad
\varphi(x,t,z)=\sum\limits_{j=0}^{n}\varphi_j(x,t)z^j
\ee
are a polynomial solution of systems (\ref{tfpsiphi1}), (\ref{xfpsiphi1}).
In what follows the system (\ref{tfpsiphi2}), (\ref{xfpsiphi2}) together with  (\ref{ab}), (\ref{cd}) and the system (\ref{F}), (\ref{Phi}) are identified.

It is well known that necessary and sufficient conditions for the systems
(\ref{F}), (\ref{Phi}) to be compatible are the following ones:
\be\label{Frobenius}
\sum\limits_{j=0}^{N}\left(\frac{\partial F_j}{\partial y_m}\Phi_m-\frac{\partial \Phi_j}{\partial y_m}F_m  \right)=0, \qquad j=0,1,\ldots, N.
\ee
The direct calculation shows that compatibility conditions (\ref{Frobenius}) are fulfilled.
\begin{thm}
Let $f_j(x,t), \psi_l(x,t), \varphi_m(x,t)$ be a compatible local solution of autonomous system (\ref{F}), (\ref{Phi}). Then defined by (\ref{ab}), (\ref{cd}) functions $a(x,t)$ and $b(x,t)$ are a local infinitely differentiable in $x$ and $t$ solution of the following nonlinear  equations:
\be\label{sysNLS}
\ii \dot a+a^{\prime\prime}+2a^2b=0,\qquad
\ii\dot b-b^{\prime\prime}-2b^2a=0.
\ee
\end{thm}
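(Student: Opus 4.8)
The plan is to verify each equation of (\ref{sysNLS}) by direct substitution, exploiting the fact that the closure relations (\ref{ab})--(\ref{cd}) are engineered precisely so that everything except the genuinely nonlinear term cancels. The guiding heuristic is the dictionary between the abstract systems (\ref{tfpsiphi1})--(\ref{xfpsiphi1}) and the original ones (\ref{tfpsiphi})--(\ref{xfpsiphi}): there $a,b,c,d$ play the roles of $v,\bar v,v',\bar v'$, so one expects $c=a'$ and $d=b'$, after which the first equation of (\ref{sysNLS}) is literally the equation (\ref{NLS}) written for $a$ and the second is its formal conjugate.

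First I would extract the content of the top-degree coefficients. The closure relations (\ref{ab}) say precisely that $\psi_n=af_{n+1}$ and $\varphi_n=-bf_{n+1}$. Feeding these into the $j=n+1$ and $j=n$ cases of the first line of (\ref{xfpsiphi2}) makes the source terms cancel, giving $f_{n+1}'=0$ and $f_n'=0$; likewise the $j=n+1$ case of the first line of (\ref{tfpsiphi2}) gives $\dot f_{n+1}=0$. Hence the leading coefficient $f_{n+1}$ is a genuine nonzero constant, which I denote $K$, while $f_n$ is constant in $x$. Differentiating $a=\psi_n/K$ in $x$ and using the $l=n$ case of the second line of (\ref{xfpsiphi2}) then reproduces exactly the right-hand side of the first formula in (\ref{cd}), that is $c=a'$; the mirror computation with $b=-\varphi_n/K$ and the third line gives $d=b'$.

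With these reductions in hand the two nonlinear equations become a bookkeeping exercise. For the first one I would compute $\dot a=\dot\psi_n/K$ from the $l=n$ case of the second line of (\ref{tfpsiphi2}), substituting $c=a'$ and $\psi_n=aK$; separately I would compute $a''=\psi_n''/K$ by applying the second line of (\ref{xfpsiphi2}) twice (using $f_n'=0$). Both expressions involve the lower-order quantities $f_n$, $f_{n-1}$ and $\psi_{n-2}$, but upon forming $\ii\dot a+a''$ these terms cancel identically, leaving $\ii\dot a+a''=-2a^2b$, which is the first equation of (\ref{sysNLS}). The second equation follows by the parallel computation starting from $b=-\varphi_n/K$, using the $\dot\varphi_n$ and $\varphi'$ relations in place of the $\psi$ ones; alternatively it can be deduced from the first one via the symmetry (\ref{sym}).

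The only genuine subtlety — and the step I would treat most carefully — is establishing that the leading coefficients $f_{n+1}$ and $f_n$ are constant in the appropriate variables, since the entire cancellation rests on it: it is exactly the closure relations (\ref{ab}) that force the top-order source terms to vanish. I would also flag that both the $x$- and the $t$-equations are applied to one and the same solution $\{f_j,\psi_l,\varphi_m\}$, which is legitimate precisely because the hypothesis furnishes a \emph{compatible} solution of (\ref{F}), (\ref{Phi}); granting this, the identities (\ref{sysNLS}) are purely algebraic consequences, and the compatibility conditions (\ref{Frobenius}) need not be re-invoked.
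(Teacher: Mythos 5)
Your proposal is correct and follows essentially the same route as the paper: establish that $f_{n+1}$ and $f_n$ are constant in the relevant variables, deduce $c=a'$ and $d=b'$ from (\ref{ab}), (\ref{cd}) and the $x$-equations for $\psi_n,\varphi_n$, then combine the $t$-equation for $\psi_n$ with the twice-differentiated $x$-equation so that the lower-order terms cancel and only $-2a^2b\,f_{n+1}$ survives. The paper phrases the computation in terms of $\dot\psi_n-\ii\psi_n''$ rather than $\ii\dot a+a''$, but this is the identical argument up to dividing by the constant $f_{n+1}$.
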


\begin{proof}
Systems (\ref{tfpsiphi2}), (\ref{xfpsiphi2}) contain two pair of equations:
\begin{align}
\dot\psi_n=&-2cf_n+4\ii af_{n-1}+2\ii ab\psi_{n}-4\ii\psi_{n-2},\label{psit}\\
\psi^\prime_n=&2\ii af_n-2\ii\psi_{n-1}\label{psix}
\end{align}
and
\begin{align}
\dot\varphi_n=&2df_n+4\ii bf_{n-1}-2\ii ab\varphi_n+4\ii\varphi_{n-2},\label{phit}\\
\varphi^\prime_n=&2\ii bf_n+2\ii\varphi_{n-1}\label{phix}
\end{align}
Since right hand sides $F_j(y_0,y_1,\ldots,y_N)$ and $\Phi_j(y_0,y_1,\ldots,y_N)$ of systems (\ref{F}), (\ref{Phi}) are infinitely differentiable in $y_0,y_1,\ldots,y_N$ then their solutions are also infinitely differentiable in $t$ and $x$. Differentiating (\ref{psix}) and (\ref{phix}) in $x$, taking into account the  independence of $f_n$  on $x$ and also equations
$$
\psi_{n-1}^\prime=2\ii af_{n-1}-2\ii\psi_{n-2}, \qquad \varphi^\prime_{n-1}=2\ii b\varphi_{n-1}+2\ii\varphi_{n-2}
$$
we find
$$
\psi_n^{\prime\prime}=2\ii a^\prime f_n+4af_{n-1}-4\psi_{n-2},\qquad
\varphi^{\prime\prime}_n=-2\ii b^\prime f_n+4bf_{n-1}+4\varphi_{n-2}
$$
Let us multiply the last equations on $\ii$ and subtract them from (\ref{psit}) and (\ref{phit}). Then we have
$$
\dot\psi_n-\ii\psi_n^{\prime\prime}-2\ii ab \psi_n=2(a^\prime-c)f_{n-1}, \qquad
\dot\varphi_n+\ii\varphi_n^{\prime\prime}+2\ii ab \varphi_n=2(d-b^\prime)f_{n-1}.
$$
As $f_{n+1}$ is independent on $t$ and $x$ then formulas (\ref{ab}), (\ref{cd}) together with equations (\ref{psix}), (\ref{phix}) give that $a^\prime=c$ and $d=b^\prime$. Finally, since $\psi_n=af_{n+1}$ and $\varphi_n=-bf_{n+1}$, then the last two equations coincide with system (\ref{sysNLS}).
\end{proof}

Under some conditions the systems (\ref{tfpsiphi2}), (\ref{xfpsiphi2}) have a global solution.
\begin{lem}
Let initial conditions satisfy the symmetry properties:
\be\label{sym1}
f_j=\bar f_j, \qquad \varphi_j=-\bar\psi_j.
\ee
Then system  (\ref{tfpsiphi2}), as well as  the system (\ref{xfpsiphi2}), has a solution defined for all $t\in\mathbb{R}$ or $x\in\mathbb{R}$ respectively.  The solution possesses the same property (\ref{sym1}) for all $t$ and $x$.
\end{lem}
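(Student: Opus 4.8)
The plan is to treat the two claims---invariance of the symmetry (\ref{sym1}) and global solvability---separately, and to reduce both to standard facts about autonomous ODE systems with smooth right-hand sides. Since the right-hand sides of (\ref{F}) and (\ref{Phi}) are polynomials in $y_0,\dots,y_N$ (here $f_{n+1}\equiv1$, so the denominators in (\ref{ab}), (\ref{cd}) never vanish), local existence and uniqueness for each system is immediate from the Picard--Lindel\"of theorem. The only real issue is to exclude finite-time blow-up; this I intend to do by extracting from the conservation law (\ref{P0}) an a priori bound on the whole solution vector, valid precisely when (\ref{sym1}) holds.

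First I would prove that (\ref{sym1}) is preserved by each flow. Consider the involution $\sigma\colon (f_j,\psi_l,\varphi_m)\mapsto(\overline{f_j},-\overline{\varphi_l},-\overline{\psi_m})$, whose fixed-point set is exactly the set cut out by (\ref{sym1}). A short computation with (\ref{ab}), (\ref{cd}) shows that $\sigma$ sends the coefficient functions $a,b,c,d$ to $\overline{b},\overline{a},\overline{d},\overline{c}$, so that conjugating the equations of (\ref{tfpsiphi2}) (respectively (\ref{xfpsiphi2})) and relabelling via $\sigma$ returns the same system; that is, $\sigma$ maps solutions to solutions. Hence if the initial data lies in the fixed-point set, the uniqueness part of Picard--Lindel\"of forces the solution to remain there throughout its interval of existence. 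This disposes of the last sentence of the lemma once global existence is established.

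The heart of the argument is the a priori bound, and this is where the symmetry is essential. Under (\ref{sym1}) one has, for real $z$, that $f(z)=\sum f_j z^j$ is real and $\varphi(z)=-\overline{\psi(z)}$, so the conserved quantity (\ref{P0}) becomes
\be\label{coercive}
f(z)^2+|\psi(z)|^2=P(z)\ge0,\qquad z\in\mathbb{R}.
\ee
Because $P(z)$ is determined by the initial data alone, it is a fixed polynomial; fixing $n+2$ distinct real nodes $z_0,\dots,z_{n+1}$ and applying (\ref{coercive}) at each gives $|f(z_k)|\le\sqrt{P(z_k)}$ and $|\psi(z_k)|\le\sqrt{P(z_k)}$ with the right-hand sides constant in the evolution variable. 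Since $f$ has degree $n+1$ and $\psi$ degree $n$, Lagrange interpolation through these fixed nodes expresses every coefficient $f_j$ and $\psi_l$ as a fixed linear combination of the bounded values $f(z_k)$, $\psi(z_k)$; together with $\varphi_m=-\overline{\psi_m}$ this bounds the entire vector $(f_j,\psi_l,\varphi_m)$ by a constant independent of $t$ (resp. $x$). Thus along either flow the solution remains in a fixed compact set.

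Finally I would invoke the standard continuation theorem: a solution of an ODE with smooth right-hand side that stays in a fixed compact set on its maximal interval of existence is in fact defined for all values of the independent variable. Combined with the preceding bound this yields global existence in $t$ for (\ref{tfpsiphi2}) and in $x$ for (\ref{xfpsiphi2}), and the symmetry persists by the second paragraph. The one genuinely nontrivial step is the passage from (\ref{P0}) to (\ref{coercive}): the conserved form $f^2-\psi\varphi$ is sign-indefinite in general, and it is only the reality of $f$ and the relation $\varphi=-\overline{\psi}$ on the real axis that turn it into the coercive, blow-up-preventing quantity $f^2+|\psi|^2$. The symmetry-preservation and continuation steps are routine by comparison.
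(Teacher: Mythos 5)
Your proof is correct and follows essentially the same route as the paper: the symmetry is propagated by conjugating the system and invoking uniqueness of solutions of ODEs, and global existence comes from the observation that under the symmetry the conserved quantity $f^2-\psi\varphi$ becomes, on the real axis, the coercive expression $f^2+|\psi|^2=P(z)\ge 0$, which yields a uniform bound and hence continuation. The only deviation is in how the pointwise bound on the real axis is converted into bounds on the coefficients: the paper factorizes $P=P^+P^-$ and applies Bernstein-type inequalities to derivatives at $z=0$, whereas you use Lagrange interpolation at $n+2$ fixed real nodes --- a sound and, if anything, more elementary substitute for that step.
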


\begin{proof}
The proof will be done for system (\ref{tfpsiphi2}).  We first show that the symmetry property is conserved for all $t$. Indeed, let $f_j(t)$, $\psi_l(t)$, $\varphi_m(t)$ be a solution of (\ref{tfpsiphi2}). Consider a new functions:
$\hat f_j(t)=\bar f_j(t), \, \hat\psi_l(t)=-\bar\varphi_l(t), \, \hat\varphi_m(t)=-\bar\psi_m(t)$. After complex conjugation of system (\ref{tfpsiphi2}) it is easy to see that $\hat f_j(t), \, \hat\psi_l(t), \, \hat\varphi_m(t)$ satisfy (\ref{tfpsiphi2}). Under condition $t=t_0$, in view of (\ref{sym1}), these two sets of functions are coincide. Then the uniqueness theorem for ODEs gives that they are coincide for all $t$. Further, the existence theorem for ODEs gives the local solution of  (\ref{tfpsiphi2}), which inherits the symmetry properties (\ref{sym1}). In turn these properties lead to the uniform boundedness of the solution and, hence, the solution has an extension for all $t\in\mathbb{R}$.

The uniform boundedness  is a consequence   of the conservation law (\ref{P0}):
\be\label{P1}
P(z)\equiv f^2(z,t)-\psi(z,t)\varphi(z,t),
\ee
where polynomials $f, \psi, \varphi$ are constructed by $f_j(t)$, $\psi_l(t)$, $\varphi_m(t)$. In view of (\ref{sym1}), $f(z,t)=\overline f(\bar z,t)$ and $\varphi(z,t)=-\overline\psi(\bar z,t)$. Hence $P(z)=\overline P(\bar z)$. Therefore the polynomial $P(z)$ can be factorized:
$$
P(z)=P^+(z)P^-(z),
$$
where $P^+(z)$ is a polynomial of $n+1$-th degree. Its zeroes are located in the closed upper half plane. The second polynomial is the complex conjugated to the first one: $P^-(z)=\overline P^+(\bar z)$. For real $z$ the conservation law (\ref{P1}) takes the form
$$
f^2(z,t)+|\psi(z,t)|^2\equiv|P^+(z)|^2.
$$
Hence
$$
|f(z,t)|\le |P^+(z)|, \qquad |\psi(z,t)|=|\varphi(z,t)|\le |P^+(z)|,\qquad z\in\mathbb{R}.
$$
By using the Bernstien inequalities \cite{Levin} we find
$$
\left\vert\frac{\partial^m f(z,t)}{\partial z^m}\right\vert\le
\left\vert\frac{d^m P^+(z)}{dz^m}\right\vert, \quad m=1,2,\ldots, \quad z\in\mathbb{R}.
$$
Choosing $m=0,1,2,\ldots, n$ and $z=0$ we obtain
$$
|f_m(t)|\le|P^+_m|, \qquad |\psi_m(t)|=|\varphi_m(t)|\le |P^+_m|,\qquad
m=1,2,\ldots,n+1,
$$
where $P^+_m$ are the coefficients of polynomial $P^+(z)$, which are uniquely defined by initial conditions of the  autonomous system.  The analogous proof goes for the system (\ref{xfpsiphi2}).
\end{proof}
\begin{cor}
If initial conditions for the systems (\ref{tfpsiphi2}) and (\ref{xfpsiphi2}) possess properties (\ref{sym1}) then there exists  the compatible global infinitely differentiable and bounded solution $\hat f_j(t)=\bar f_j(t), \, \hat\psi_l(t)=-\bar\hat\varphi_l(t), \, \hat\varphi_m(t)=-\bar\psi_m(t)$.
Moreover, the function $a(x,t)=\psi_n(x,t)/f_{n+1}(0,0)$ is a solution of the nonlinear Schr\"odinger  equation (\ref{NLS}).
\end{cor}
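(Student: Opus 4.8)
The plan is to assemble the corollary from the preceding Theorem and the preceding Lemma together with the symmetry (\ref{sym1}); the only genuine work beyond those results is to promote the two one-variable global solutions into a single solution on the whole $(x,t)$-plane, and then to observe that the symmetry collapses the coupled pair (\ref{sysNLS}) into one copy of (\ref{NLS}).

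First I would produce a local solution in both variables. The compatibility conditions (\ref{Frobenius}), already verified, say precisely that the vector fields $F=(F_j)$ and $\Phi=(\Phi_j)$ driving the autonomous systems (\ref{F}) and (\ref{Phi}) commute. Fixing initial data $y(0,0)$ that obeys (\ref{sym1}), the Frobenius integrability theorem then integrates these commuting fields to a unique smooth $y(x,t)$ on a neighbourhood of the origin satisfying $\partial_t y=F(y)$ and $\partial_x y=\Phi(y)$ simultaneously; infinite differentiability is automatic because $F$ and $\Phi$ are polynomial.

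Next I would globalize, and I expect this step to be the main point to argue carefully, since the preceding Lemma only gives global existence along each axis separately. The symmetry-propagation argument of that Lemma applies verbatim to each flow: the conjugate-swapped functions solve the same system and agree with the original at the origin, so by uniqueness (\ref{sym1}) holds in both the $x$- and the $t$-direction, hence throughout the neighbourhood. With both systems in force, the conserved quantity $P(z)$ of (\ref{P0}) is constant in $x$ and in $t$, so it equals its value at the origin on the entire orbit of $y(0,0)$. The Bernstein bound of the preceding Lemma then yields $|f_m|,|\psi_m|,|\varphi_m|\le|P^+_m|$ uniformly in $(x,t)$. A uniformly bounded solution of polynomial ODEs cannot blow up in finite time, so the orbit through $y(0,0)$ extends to all of $\mathbb{R}^2$; because $F$ and $\Phi$ commute, the two one-parameter flows fit together consistently and $y(x,t):=\Psi^t_F\circ\Psi^x_\Phi\,y(0,0)$ is the desired global, bounded, smooth, symmetric solution.

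Finally I would read off the equation. By the preceding Theorem the functions $a=\psi_n/f_{n+1}$ and $b=-\varphi_n/f_{n+1}$ from (\ref{ab}), (\ref{cd}) solve the coupled system (\ref{sysNLS}). The first equations of (\ref{tfpsiphi2}) and (\ref{xfpsiphi2}) force $f_{n+1}$ to be independent of $t$ and of $x$, so $f_{n+1}\equiv f_{n+1}(0,0)$ is a real constant (real by $f_j=\bar f_j$), giving $a=\psi_n/f_{n+1}(0,0)$ as claimed. The second relation of (\ref{sym1}), now valid for all $(x,t)$, reads $\varphi_n=-\bar\psi_n$, whence $b=-\varphi_n/f_{n+1}=\overline{\psi_n/f_{n+1}}=\bar a$. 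Substituting $b=\bar a$ turns $2a^2b$ into $2|a|^2a$, so the first equation of (\ref{sysNLS}) becomes $\ii\dot a+a^{\prime\prime}+2|a|^2a=0$, i.e. (\ref{NLS}) with $u=a$; the second equation is then just its complex conjugate and carries no new information.
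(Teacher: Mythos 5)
Your proof is correct and follows essentially the same route as the paper: the paper's own argument for this Corollary consists only of the computation $b=-\varphi_n/f_{n+1}(0,0)=\overline{\psi_n}/f_{n+1}(0,0)=\bar a$, which collapses the pair (\ref{sysNLS}) into the single equation (\ref{NLS}), with the existence of the global compatible bounded solution taken for granted from the preceding Lemma and the verified compatibility conditions (\ref{Frobenius}). Your extra care in gluing the two globally defined one-parameter flows into one smooth bounded solution on $\mathbb{R}^2$ (via preservation of (\ref{sym1}) along each flow, constancy of $P(z)$ in both variables, and the Bernstein bounds) fills in a step the paper leaves implicit, but it is the same underlying argument.
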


Indeed, due to the symmetry properties,
$$
b(x,t)=-\frac{\varphi_n(x,t)}{f_{n+1}(0,0)}=\frac{\overline\psi_n(x,t)}{f_{n+1}(0,0)}=
\overline a(x,t).
$$
Therefore the first equation in (\ref{sysNLS}) is the NLS equation (\ref{NLS}), and the second is   complex-conjugated with the first (\ref{NLS}).

\section{Periodic Cauchy problem for the nonlinear Schr\"odiger  equations with
finite-gap initial data}

On the whole line $-\infty<x<\infty$ let us consider the Cauchy periodic problem for equation (\ref{NLS}), whose an initial datum $u(x)$  is   $l$-periodic $n$-gap potential. Consider also the set of shifts $v(x,x_0)=u(x+x_0)$. Then, due to the Lemma 1, the monodromy matrix $\Phi(z,x_0)$ and corresponding functions $f(z,x_0), \psi(z,x_0), \varphi(z,x_0)$ have the form (\ref{h}) with a function $h(z)$ that independent on $x_0$,  and  $\tilde f(z,x_0), \tilde\psi(z,x_0), \tilde\varphi(z,x_0)$ are a polynomial solution of equation (\ref{xfpsiphi}).
It follows from the periodicity of the monodromy matrix $\Phi(z,x_0)$, the polynomials $\tilde f(z,x_0), \tilde\psi(z,x_0), \tilde\varphi(z,x_0)$ are $l$-periodic functions on $x_0$.
Therefore coefficients of the polynomials $\tilde f(z,x_0), \tilde\psi(z,x_0), \tilde\varphi(z,x_0)$ are a periodic solution of autonomous system (\ref{xfpsiphi2}).
Further we solve (for every fixed $x_0$) the system (\ref{tfpsiphi2}) with initial data that is defined by the coefficients of  polynomials $\tilde f(z,x_0), \tilde\psi(z,x_0), \tilde\varphi(z,x_0)$. Taking into account the Corollary from Theorem 1 we obtain some solution $v(x_0,t)$ of equation (\ref{NLS}), where we identify $x_0$ with $x$. The general theory of ODEs guarantees that obtained solution $v(x_0,t)$ is $l$-periodic smooth  and unique.
Obtained simultaneously the polynomials  $\tilde f(z,x_0,t), \tilde\psi(z,x_0,t), \tilde\varphi(z,x_0,t)$ are the solution of linear system (\ref{tfpsiphi1}) because from now  the function $v(x_0,t)$ is already known. After a multiplication on the function $h(z)$ we obtain once more solution of  (\ref{tfpsiphi1}) with the initial datum generated by potential $u(x)$. On the other hand, the solution $v(x_0,t)$ of equation (\ref{NLS}) defines monodromy matrix and corresponding the entire analytic functions $ f(z,x_0,t),  \psi(z,x_0,t),  \varphi(z,x_0,t)$, which satisfy the system  (\ref{tfpsiphi1}) with the  initial data generated by $u(x)$. Again, the uniqueness theorem  gives that
\begin{align*}
f(z,x_0,t)=&\tilde f(z,x_0,t) h(z),\\
\psi(z,x_0,t)=&\tilde\psi(z,x_0,t) h(z),\\
\varphi(z,x_0,t)=&\tilde\varphi(z,x_0,t)h(z).
\end{align*}
The last equalities mean that $v(x_0,t)$ are $n$-gap potential  for every $t$.
Thus we proved
\begin{thm}
For any periodic $n$-gap initial datum, the Cauchy periodic problem to the Schr\"odinger nonlinear equation (\ref{NLS}) is uniquely solvable. Moreover, the solution is also periodic $n$-gap potential for every fixed $t$.
\end{thm}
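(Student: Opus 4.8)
The plan is to reduce the periodic Cauchy problem to the two finite autonomous systems (\ref{F}) and (\ref{Phi}) of the previous section, to run them first in the spatial variable and then in time, and finally to recover the isospectral (finite-gap) character of the flow from a uniqueness argument for linear ODEs. First I would fix the $l$-periodic $n$-gap datum $u(x)$ and pass to the family of shifts $v(x,x_0)=u(x+x_0)$. Applying Lemma 1 to $u$ produces an entire factor $h(z)$, independent of $x_0$, together with polynomials $\tilde f(z,x_0)$, $\tilde\psi(z,x_0)$, $\tilde\varphi(z,x_0)$ that solve the $x_0$-system (\ref{xfpsiphi2}); by periodicity of the monodromy matrix in $x_0$ these polynomials, hence their coefficients, are $l$-periodic. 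This fixes the data $f_j,\psi_l,\varphi_m$ at $t=0$, and one checks that they obey the symmetry relations (\ref{sym1}) inherited from (\ref{sym}).

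Next I would evolve in $t$. Taking the coefficients just obtained as initial data for the autonomous system (\ref{tfpsiphi2})/(\ref{F}), Lemma 2 (whose boundedness rests on the conservation law (\ref{P0}) together with the symmetry) yields a global, infinitely differentiable solution preserving (\ref{sym1}); by the Corollary to Theorem 1 the quantity $v(x_0,t)=\psi_n(x_0,t)/f_{n+1}(0,0)$ then solves the NLS equation (\ref{NLS}) with $b=\bar a$. Smooth dependence on the parameter $x_0$ and the ODE uniqueness theorem guarantee that $v(x_0,t)$ is $l$-periodic in $x_0$, smooth, and the unique such solution, which settles the first assertion after identifying $x_0$ with $x$. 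Here the compatibility (\ref{Frobenius}) of the two flows, verified in the previous section, is essential: it is what allows the $x_0$-periodic polynomial solution to be propagated consistently in $t$.

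For the finite-gap claim I would argue by uniqueness. With $v(x_0,t)$ now a known function, the propagated polynomials $\tilde f(z,x_0,t),\tilde\psi,\tilde\varphi$ solve the linear $t$-system (\ref{tfpsiphi1}); multiplying by $h(z)$ gives a solution of (\ref{tfpsiphi1}) whose value at $t=0$ is exactly the monodromy triple of $u$. On the other hand the genuine monodromy functions $f(z,x_0,t),\psi,\varphi$ attached to $v(x_0,t)$ (entire of exponential type, a priori not polynomial) satisfy the same linear system with the same coefficients and the same $t=0$ data. Uniqueness for (\ref{tfpsiphi1}) forces
\[
f=\tilde f\,h,\qquad \psi=\tilde\psi\,h,\qquad \varphi=\tilde\varphi\,h
\]
for all $t$, so the monodromy matrix of $v(\cdot,t)$ factors through the same $h(z)$; by the definition of a finite-gap potential this means $v(\cdot,t)$ is $n$-gap for every $t$, as required.

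I expect the main obstacle to be exactly this last identification: one must be certain that the two $t$-flows — the finite polynomial system built from (\ref{ab}), (\ref{cd}) and the infinite-dimensional flow carried by the true monodromy functions — are governed by the same linear equation with matching initial data, so that the entire (possibly transcendental) monodromy functions are pinned down to be polynomial multiples of the fixed $h(z)$. The ancillary facts that make this work — that the half-trace $g(z)$ is independent of both $x_0$ and $t$, and that $P(z)=f^2-\psi\varphi$ is conserved — encode the isospectrality that ultimately keeps the number of gaps equal to $n$ throughout the evolution.
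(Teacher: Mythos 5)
Your proposal follows the paper's own argument essentially step for step: the shift family $v(x,x_0)=u(x+x_0)$ and Lemma 1 to extract the common factor $h(z)$ and the $l$-periodic polynomial coefficients, the time evolution of those coefficients via the autonomous system together with the Corollary to Theorem 1 to produce the NLS solution $v(x_0,t)=\psi_n/f_{n+1}(0,0)$, and finally the uniqueness theorem for the linear system (\ref{tfpsiphi1}) to identify the true (entire) monodromy functions of $v(\cdot,t)$ with $h(z)$ times the propagated polynomials, which is exactly how the paper concludes that the solution stays $n$-gap. The only additions — making explicit the inheritance of the symmetry (\ref{sym1}) and the role of the compatibility conditions (\ref{Frobenius}) — are points the paper uses implicitly, so this is the same proof.
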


\section{Dynamics of zeroes of polynomials}

Consider the compatible solution $f_j(x,t), \psi_l(x,t), \varphi_m(x,t)$ ($0\le j\le n,\, 0\le l,m\le n-1$) of autonomous systems (\ref{tfpsiphi1}) and (\ref{xfpsiphi1}), whose initial data have the properties (\ref{sym1}). Then, due to the previous considerations, formula
\be\label{v1}
v(x,t)=a(x,t)=\frac{\psi_n(x,t)}{f_{n+1}(0,0)}
\ee
defines some solution of nonlinear equation (\ref{NLS}). Moreover, constructed by (\ref{polyn}) polynomials $f(z,x,t), \psi(z,x,t), \varphi(z,x,t) $ satisfy  properties (\ref{sym}). They are the solution of linear systems (\ref{tfpsiphi}) and (\ref{xfpsiphi}). It allow to give another representation for the solution $v(x,t)$. Indeed, let $\mu_j=\mu_j(x,t)$ ($j=1,2,\ldots,n$) be zeroes of the polynomial $\psi(z,x,t)$. Zeroes of the polynomial $\varphi(z,x,t)$ are complex- conjugated to $\mu_j$. The Vieta formulas give
\be\label{smu}
\sum\limits_{j=1}^n\mu_j(x,t)=-\frac{\psi_{n-1}(x,t)}{\psi_n(x,t)},
\ee
\be\label{snu}
\sum\limits_{j=1}^n\nu_j(x,t)=-\frac{f_{n}(0,0)}{f_{n+1}(0,0)},
\ee
where $\nu_j(x,t)$ ($j=1,2,\ldots,n$) are zeroes of polynomial $f(z,x,t)$.
Without loss of generality we  put $f_{n+1}(0,0)=1$.
Then, due to conservation law (\ref{P1}),
$$
\frac{f_{n}(0,0)}{f_{n+1}(0,0)}=-\frac{1}{2}\sum\limits_{j=1}^{n+1}( E_j+ \bar E_j)=K,
$$
where $E_j$ and $\bar E_j$ are zeroes of polynomial
\be\label{P2}
P(z)=\prod\limits_{j=1}^{2n+2} (z-E_j)\equiv f^2(z,x,t)-\psi(z,x,t)\varphi(z,x,t).
\ee
The second equation of system (\ref{xfpsiphi1}) with $l=n$ together with (\ref{smu}), (\ref{snu}) give
\be\label{vx}
\frac{\partial}{\partial x}\ln\psi_n(x,t)=\varPhi(x,t),\qquad \varPhi(x,t) =2\ii \sum\limits_{k=1}^{n}\mu_k(x,t)+2\ii K.
\ee
By analogy, the second equation of system (\ref{tfpsiphi1}) with $l=n$ together with (\ref{P2}), (\ref{smu}), (\ref{snu}) give
\be\label{vt}
\frac{\partial}{\partial t}\ln\psi_n(x,t) =F(x,t),
\ee
where
$$F(x,t)=2\ii\left[ \sum\limits_{j>k} E_jE_k-\frac{3}{4}\left( \sum\limits_{k=1}^{2n+2}E_k\right)^2\right]-
4\ii\left[K\sum\limits_{k=1}^{n}\mu_k(x,t)+\sum\limits_{j>k}\mu_j(x,t)\mu_k(x,t)\right]. $$
Here $\mu_j(x,t)$ are zeroes of polynomial $\psi(z,x,t)$, and $E_j$ are zeroes of the polynomial $P(z)$. An integration of (\ref{vx}), (\ref{vt}) lead to the representation for $v(x,t)$:
$$
v(x,t)=\psi_n(0,0) \exp\left\{\int\limits_0^t f(0,s)ds + \int\limits_0^x \Phi(\xi,t)d\xi \right\}.
$$

The systems of equations (\ref{tfpsiphi1}), (\ref{xfpsiphi1}) gives differential equations for $\mu_j(x,t)$. Indeed, dividing the second equation of the system (\ref{tfpsiphi1}) on $\psi$ and using (\ref{ab}), (\ref{cd}) we find
$$
\frac{\partial}{\partial t}\ln\psi(z,x,t)=\frac{\partial}{\partial t}\ln\left(
\psi_n\prod\limits_{l=1}^n(z-\mu_l)\right)=-4\ii v \left(\sum\limits_{k=1}^{n}\mu_k(x,t) +K-z\right)\frac{f(z)}{\psi(z)}.
$$
 Here and below we suppose that zeroes $\mu_j$ are simple. The multiplication of this equation on $z-\mu_j$ and the passage (as $z\to\mu_j$) to the limit give the following equations:
$$
\frac{\partial}{\partial t}\mu_j(x,t)=
4\ii v\left(\sum\limits_{l=1}^{n}\mu_l+K-\mu_j\right)
\frac{f(\mu_j)}{\frac{\partial\psi}{\partial z}(\mu_j)}.
$$
Since
$$
\psi(z)=\psi_n \prod\limits_{l=1}^n(z-\mu_l)
$$
then
$$
\frac{\partial\psi(z)}{\partial z}\vert_{z=\mu_j}= \psi_n \prod\limits_{l\neq j}^n(\mu_j-\mu_l).
$$
Finally, since $f^2(\mu_j)=P(\mu_j)$ and $v=\psi_n$, then
\be\label{mut}
\frac{\partial}{\partial t}\mu_j=4\ii \left(\sum\limits_{l=1}^{n}\mu_l+K-\mu_j\right)
\frac{\sqrt{P(\mu_j)}}{\prod\limits_{l\neq j}(\mu_j-\mu_l)},\qquad j=1,2,\ldots,n
\ee
By the same way we find
\be\label{mux}
\frac{\partial}{\partial x}\mu_j(x,t)= -2\ii\frac{\sqrt{P(\mu_k)}}{\prod\limits_{l\neq j}(\mu_j-\mu_l)},\qquad j=1,2,\ldots,n.
\ee
Identity (\ref{P2}) means that if an  arbitrary polynomial $P(z)$ is positive on the real axis then initial values $\mu_j(0,0)$ and $\psi_n(0,0)$ as parameters have to satisfy the condition:
$$
\sum\limits_{j=1}^{n+1}(z- E_j)(z-\bar E_j)
-|\psi_n(0,0)|^2\prod\limits_{j=1}^{n}(z-\mu_j(0,0))(z-\bar\mu_j(0,0))=f^2(z),
$$
where $f(z)$ is a polynomial of $n$-th degree with real coefficients.

Earlier  such  systems, like (\ref{mut}),  (\ref{mux}), and related to the Korteweg de Vries equation, were  considered in \cite{M}, \cite{IM}, \cite{D}. In \cite{IM}, \cite{D} they were integrated with the help of  the Abel map and the Jacobi inversion problem. Our systems (\ref{mut}),  (\ref{mux}) are also integrated by the same procedure. Below we reproduce the corresponding result \cite{IK}.
\\\\

\centerline{A.R.Its and V.P. Kotlyarov}

\centerline{\bf Explicit formulas for solutions of the nonlinear Schr\"odinger  equation }
\bigskip
\centerline{(Presented by the academician V.A.Marchenko 11.XII.1975) }
\bigskip
\centerline{\bf Summary }

{\it The explicit formulas are obtained for solutions of the Schr\"odinger nonlinear equation $$\ii u_t+u_{xx}+2|u|^2u=0.$$ The formulas are constructed by means of $\theta$-functions.}
\bigskip
\setcounter{section}{6}
\setcounter{equation}{0}

The study  of periodic and almost periodic solutions of some nonlinear partial differential equations has shown a relationship of such  solutions with classical objects of the algebraic geometry \cite{DMN}. This gave the possibility to construct explicit formulas for solutions of some nonlinear equations by using theta functions. The present report is devoted to the construction of such type of solutions to the equation:
\be\label{NLS1}
\ii u_{t}+u_{xx}+ 2|u|^2 u=0.
\ee
Below we will use results and methods of \cite{K} and \cite{I}.

Let coefficients $f_k(x,t), \psi_l(x,t), \varphi_m(x,t)$ ($0\le k\le n+1, 0\le l,m\le n$) of polynomials $f(z,x,t), \psi(z,x,t), \varphi(z,x,t) $ satisfy following conditions: $f_k=\bar f_k, \psi_m=-\bar\varphi_m$. It is proved in \cite{K} that, if functions $f_k(x,t), \psi_l(x,t), \varphi_m(x,t)$ are a compatible solution to the special autonomous system of differential equations, then function $u(x,t):=\psi_{n}(x,t)/f_{n+1}(0,0)$ is a bounded infinitely differentiable solution to equation (\ref{NLS1}). Moreover, the following formulas hold:
\be\label{ux}
\frac{\partial}{\partial x}\ln u(x,t)=2\ii \sum\limits_{k=1}^{n}\mu_k(x,t)+2\ii K, \qquad K=-\frac{1}{2}\sum\limits_{k=1}^{2n+2} E_k,
\ee
\be\label{ut}
\frac{\partial}{\partial t}\ln u(x,t)=2\ii\left[ \sum\limits_{j>k} E_jE_k-\frac{3}{4}\left( \sum\limits_{k=1}^{2n+2}E_k\right)^2\right]-
4\ii\left[K\sum\limits_{k=1}^{n}\mu_k(x,t)+\sum\limits_{j>k}\mu_j(x,t)\mu_k(x,t)\right], \ee
where $\mu_j(x,t)$ are zeroes of polynomial $\psi(z,x,t)$, and $E_j$ are zeroes of the polynomial
\be\label{P6}
P(z)=\prod\limits_{j=1}^{2n+2} (z-E_j)\equiv f^2(z,x,t)-\psi(z,x,t)\varphi(z,x,t),
\ee
which is positive on the real axis. Without loss of generality we have put $f_n(0,0)=1$. We emphasize that  $f^2-\psi\varphi$ is independent on $x$ and $t$. Therefore complex numbers $E_j$ are also independent on $x$ and $t$. The functions $\mu_j(x,t)$ satisfy the following equations:
\be\label{mux1}
\frac{\partial}{\partial x}\mu_k(x,t)=-2\ii\frac{\sqrt{P(\mu_k)}}{\prod\limits_{j\neq k}(\mu_k-\mu_j)},
\qquad k=1,2,\ldots,n;
\ee
\be\label{mut1}
\frac{\partial}{\partial t}\mu_k(x,t)=4\ii\left(\sum\limits_{j=1}^{n}\mu_j+K-\mu_k\right)
\frac{\sqrt{P(\mu_k)}}{\prod\limits_{j\neq k}(\mu_k-\mu_j)}.
\ee
Equations (\ref{mux1}) and (\ref{mut1}) have to be considered on the Riemann surface $\sigma$ of the function $\sqrt{P(z)}$. The upper or lower sheet of the Riemann surface is defined by the sign of the square root. Point $\mu_k(x,t)$ can be located only on  one  sheet  that is defined uniquely by equality: $f(\mu_k(x,t),x,t)=\sqrt{P(\mu_k(x,t))}$. We will use in the usual  way (sf.\cite{I}) a  realization of the Riemann surface $\sigma$ and basis $(a_j, b_j)$ of cycles on it. The Abel map, after applying to equations (\ref{mux1}), (\ref{mut1}), gives that points $\mu_k(x,t)$ are a solution of  the hyper-elliptic Jacobi inversion problem of Abelian integrals:
\be\label{JIP}
\sum\limits_{k=1}^{n} U_j(\mu_k(x,t))=-2\ii C_j^1x-4\ii(C_j^2-KC_j^1)+
\sum\limits_{k=1}^{n} U_j(\mu_k(0,0)),
\ee
where Abelian integrals
$$
U_j(z)=\int\limits_{E_{2n+2}}^z(C_j^1\la^{n-1}+C_j^2\la^{n-2}+\ldots+C_j^{n})
\frac{d\la}{\sqrt{P(\la)}}, \qquad j=1,2,\dots, n
$$
are normalized by conditions:
$$
\int\limits_{a_k} dU_j(z)=\delta_{kj}, \qquad k,j=1,2,\dots, n.
$$
Equations (\ref{JIP}) give the following relations:
\be\label{summu}
\sum\limits_{k=1}^{n} \mu_k(x,t)= K_1-\frac{\ii}{2}\frac{\partial}{\partial x}\ln J_1(x,t), \qquad K_1=\sum\limits_{k=1}^{n}\int\limits_{a_k}z d U_k(z),
\ee
\be\label{summu2}
\sum\limits_{k=1}^{n} \mu^2_k(x,t)= K_2-\frac{\ii}{4}\frac{\partial}{\partial t}\ln J_1(x,t)+\frac{1}{4}\frac{\partial^2}{\partial x^2}\ln J_2(x,t), \qquad K_2=\sum\limits_{k=1}^{n}\int\limits_{a_k}z^2 d U_k(z),
\ee
where
\begin{align*}
J_1(x,t)=&\frac{\theta(g(x,t)-r)}{\theta(g(x,t)+r)},\\
J_2(x,t)=&\theta(g(x,t)-r)\theta(g(x,t)+r),\\
\theta(p)=&\sum\limits_{m\in\mathbb{Z}^{n}}\exp[\pi\ii(Bm,m)+2\pi\ii(p,m)], \quad g(x,t),r\in\mathbb{C}^{n},\\
g_j(x,t)=&-2\ii C_j^1 x-4\ii(C_j^2-KC_j^1)t+\sum\limits_{k=1}^{n} U_j(\mu_k(0,0))+
\frac{1}{2}\left(\sum\limits_{k=1}^{n} B_{kj}-j  \right),\\
r_j=&\int\limits_{E_{2n+2}}^{\infty^+} d U_j(z),\qquad B_{kj}=\int\limits_{b_k}
d U_j(z),
\end{align*}
$\infty^+$ infinite point on the Riemann surface $\sigma$.

Substituting (\ref{summu}) and (\ref{summu2}) into (\ref{ux}) and (\ref{ut}) one finds
\be\label{mux2}
\frac{\partial}{\partial x}\mu_k(x,t)=\frac{\partial}{\partial x}\ln J_1(x,t)-\ii E, \qquad E=\sum\limits_{k=1}^{2n+2} E_j-2K_1,
\ee
\be\label{mut2}
\frac{\partial}{\partial t}\mu_k(x,t)=\frac{\partial}{\partial t}\ln J_2(x,t)+\ii N(x,t),
\ee
where
$$
N(x,t)=\frac{\ii}{2}\frac{\partial}{\partial t}\ln J_1(x,t)+\frac{1}{2}\frac{\partial^2}{\partial x^2}\ln J_2(x,t)+\frac{1}{2}\left(\frac{\partial}{\partial x}\ln J_1(x,t)-\ii E\right)^2+2K_2-\sum\limits_{k=1}^{2n+2} E_j^2.
$$
Equations (\ref{mux2}) and (\ref{mut2}) give that the                                                                                                              function $N(x,t)$ is independent of  $x$. In principal, equations (\ref{mux2}) and (\ref{mut2}) give already the desired explicit formula for the solutions equation (\ref{NLS1}). This formula can be essentially improved. To this end let us show that
\be\label{N}
N(x,t)\equiv N_0:=4K_2-2\sum\limits_{k=1}^{2n+2} E_j^2-4R_1,
\ee
and
\be\label{|u|2}
|u(x,t)|^2=\frac{\partial^2}{\partial x^2}\ln \theta(g(x,t)+r)+2R_1,
\ee
where $R_1$ is a constant value. This value depends only on numbers $E_j$. We stress that (\ref{|u|2} is not a consequence of equations (\ref{mux2}) and (\ref{mut2}).  Below we give a sketch of deducing of equalities (\ref{N}) and (\ref{|u|2}).

Let us consider the operator
$$
L:=\begin{pmatrix}
     \ii & 0 \\
     0 & -\ii \\
   \end{pmatrix}\frac{d}{dx}+\begin{pmatrix}
                               0 & \ii u \\
                               \ii \bar u & 0 \\
                             \end{pmatrix},
$$
where $u=u(x,t)$ is defined by equations (\ref{mux2}) and (\ref{mut2}). It can be shown that equation $Ly=zy$ has such a solution $\psi(x,z)$ which is single valued and analytic (in $z$) on the Riemann surface $\sigma$, and its first component  $\psi_1(x,z)$ satisfies the following conditions:
\begin{enumerate}
\item zeroes of $\psi_1(x,z)$ are simple and lie over points $\mu_k(x,t)$;\\
\item poles of $\psi_1(x,z)$ are simple and lie over points $\mu_k(0,t)$;\\
\item $\psi_1(x,z)\sim e^{\ii zx}\D\frac{u(x,t)}{u(0,t)}, \quad z\to\infty^+$,\qquad on the upper sheet of $\sigma$;\\
\item $\psi_1(x,z)\sim e^{-\ii zx},\quad z\to\infty^-$,\qquad on the lower sheet of $\sigma$.
\end{enumerate}
The existence of solution $\psi(x,z)$ to equation $Ly=zy$  is already sufficient for deducing formulas (\ref{N}) and (\ref{|u|2}). Indeed, the properties i-iv lead to a representation for the function

$$
\varphi(x,z):=[u(x,t)]^{-1/2}[u(0,0)]^{1/2}\psi_1(x,z)
$$
through the theta functions:
\be\label{phi}
\varphi(x,z)=e^{\ii x\omega(z)}\frac{\theta(U(z)-g(x,t))}{\theta(U(z)-g(0,t))}
\left[\frac{\theta(g(0,t)-r)\theta(g(0,t)+r)}{\theta(g(x,t)-r)\theta(g(x,t)+r)} \right]^{1/2},
\ee
where $\omega(z)$ is the normalized Abelian integral of the second kind with simple poles at $\infty^\pm$:
$$
\omega(z)=\pm\left(z-\frac{E}{2}+\frac{R_1}{z}+\frac{R_2}{z^2}+\ldots  \right).
$$
 Numbers $R_1$, $R_2$ as well as $E$ are defined via   complex numbers $E_j$.
Moreover, by the definition, the function $\varphi(x,z)$ satisfies equation
$$
\frac{\partial^2\varphi(x,z)}{\partial x^2}+\left[z^2-\ii z \frac{\partial}{\partial x}\ln u(x,t)+\frac{1}{2}\frac{\partial^2 }{\partial x^2}\ln u(x,t)-\frac{1}{4}\left(\frac{\partial}{\partial x}\ln u(x,t) \right)^2 +|u(x,t)|^2 \right]\varphi(x,z)=0.
$$
As follows from \cite{I}, the last equation together with (\ref{phi}) lead to equality (\ref{|u|2}). Substituting  $2\ii |u|^2+\ii\frac{\partial^2 }{\partial x^2}\ln u+\ii\left(\frac{\partial}{\partial x}\ln u \right)^2$ into the expression for $N(x,t)$ from (\ref{mut2}), using (\ref{|u|2}) for $|u(x,t)|^2$ and  (\ref{mux2}) for $\frac{\partial}{\partial x}\ln u$ we arrive to (\ref{N}).

Thus from (\ref{mux2}) and (\ref{mut2}) we have the final formula for solutions $u(x,t)$ equation (\ref{NLS1}):
\be\label{finu}
u(x,t)=u(0,0)\frac{J_1(x,t)}{J_1(0,0)}\exp{(-\ii Ex+\ii N_0t)}.
\ee
Note that formula (\ref{|u|2}) is true for the modulus of $u(x,t)$.
From (\ref{P6}) follows that solution (\ref{finu}) is completely defined by parameters connected by a polynomial relation:
$$
\sum\limits_{j=1}^{n+1}(z- E_j)(z- \bar E_j)
-|u(0,0)|^2\prod\limits_{j=1}^{n}(z-\mu_j(0,0))(z-\bar\mu_j(0,0))=f^2(z),
$$
where $f(z)$ is a polynomial of $n+1$-th degree with real coefficients.

\section*{References}


\begin{thebibliography}{99}

\bibitem{GGKM} C.S.Gardner, J.M.Green, M.D.Kruskal, R.M.Miura, Method for Solving the Korteweg-deVries Equation, Phys. Rev. Lett.  {\bf 19},  (1967), pp.1095–1097.

\bibitem{M} В.А.Марченко, Периодическая задача Кортевега - де Фриза, -ДАН, 1974, 217:2, с.276-279. (see also Matem.sbornik, {\bf 95, n.3}, 1974,  pp.331-356.)

\bibitem{N} С.П.Новиков, Периодическая задача для уравнения Кортевега - де Фриза, Функц.анализ и его приложения {\bf 8, n.3}, 1974,  pp.54-66.

\bibitem{IM} А.Р.Итс, В.Б.Матвеев, Операторы Шредингера с конечнозонным спектром и N- солитонные решения уравнения Кортевега де Фриза, ТМФ, {\bf 9, n.1}, 1975,  pp.51-68.

\bibitem{D} Б.А.Дубровин, Обратная задача теории рассеяниядля периодических конечнозонных потенциалов, Функц.анализ и его приложения {\bf 8, n.3}, 1974,  pp.65-66.

\bibitem{I0} А.Р.Итс, Канонические системы с конечно-зонным спектром и периодические решения нелинейного уравнения Шредингера (неопубликовано).

\bibitem{ZSh} В.Е.Захаров, А.Б.Шабат, Точная теория двумерной самофокусировкии одномерной автомодуляции волн в нелинейных средах, ЖЭТФ, {\bf 61, n.1}, 1971,  pp.118-134.

\bibitem{Levin} Б.Я.Левин, Распределение нулей целых функций, ГИТТЛ, М., 1956, 468.

\bibitem{DMN} B. A. Dubrovin, V. B. Matveev, and S. P. Novikov, Non-linear equations of Korteweg-de Vries type, finite-zone linear operators, and Abelian varieties, Russian Math. Surv. {\bf 31:1}, 59–146 (1976).

\bibitem{K} В.П.Котляров, Периодическая задача для нелинейного уравнения Шредингера, Материалы научных семинаров ФТИНТ АН УССР "Вопросы математической физикии функ.анализа" вып.1 "Наукова думка", Киев, 1976, 121-131.

\bibitem{I} А.Р.Итс, Обращение гиперэллиптических интегралови интегрирование нелинейных дифференциальных уравнений, Вестник ЛГУ, {\bf 7, n.2}, 1976, 39-46.

\bibitem{IK} А.Р.Итс, В.П.Котляров, Явные формулы для решений нелинейного уравнения Шредингера, Доклады АН УССР, сер.А, №11, 1976, 965-968.

 \end{thebibliography}
\end{document}